\newtheorem{theorem}{Theorem}
\newtheorem{defi}{Definition}
\newtheorem{lemma}{Lemma}
\begin{document}

\title{
A Metric Graph for Which the Number of Possible Endpoints of a Random Walk Grows Minimally}
\author{ V.\,L. Chernyshev\footnote{National Research University Higher School of Economics, vchern@gmail.com}, A.\,A. Tolchennikov\footnote{Lomonosov Moscow State University, Institute for Problems in Mechanics (IPMech RAS) tolchennikovaa@gmail.com}}

\maketitle

\abstract{We prove that metric graph with the minimal growth of the number of possible endpoints of a random walk is the union of several linear paths coming out of the same vertex}

\section{Introduction. Formulation of the problem.}

Let $\Gamma = (V,E)$ be a metric graph, where $V$ is the set of vertices and $E$ is the set of edges. We assume that the edges $e_1,\ldots,e_{|E|}$ have lengths $t_1,\ldots,t_{|E|}$ respectively (we will also write $t_e$ for $e\in E$), and these numbers are linearly independent over $\mathbb{Q}$, which corresponds to a situation of general position.
Let one of the vertices be selected in the metric graph. We will call it a \textit{source}. The random walk starts at the source. At each vertex of the metric graph, the next edge is selected with nonzero probability. Accordingly, reflection occurs in hanging vertices. Edge turns are not allowed. The endpoint (final position) of such a walk can be not only a vertex, but also any point on an arbitrary edge of the metric graph.

From counting the number of possible endpoints of a random walk on edges, it is convenient to pass to the following dynamical system (see \cite{RW}), the study of which is motivated by the problem of studying the behavior of wave packets initially localized in a small neighborhood of a single point and propagating on metric graphs or hybrid spaces (see \cite{ETDS} and links therein).
At the initial moment of time, points move from the source vertex along all incident edges with unit speed. At the time when $k$ points, where $k$ can take values from $1$ to valency $\rho$, come to a vertex, $\rho$ points appear, which go out along all edges incident to vertex.
Let us denote by $N(T)$ the number of points that move along the graph by time $T$. The function $N(T)$ is piecewise constant, nondecreasing.
We assume that the value of $N(T)$ at the discontinuity points is equal to half the sum of the limits on the left and on the right. It turns out that the function $N(T)$ has a polynomial asymptotics as $T\to \infty$ (see paper \cite{RCD}, were the polynomial approximating $N(T)$ is described, and references therein).

\par

Let us take the number of edges $|E|$ and the incommensurable lengths of the edges $t_1,\ldots, t_{|E|}$. Let us consider all possible connected graphs $\Gamma$ built from these edges.
Our goal is to describe the class of graphs for which $N_\Gamma(T)$ grows in the slowest way as $T\to \infty$.

\par
 Let us note that for natural edge lengths there is a related question about choosing a graph with maximum stabilization time $N(T)$ (see \cite{RCD16}). The number $N(T)$ in this case always stops growing. So it is interesting to find the time when this happens (it is called the \textit{stabilization time}). A recent paper \cite{DW} was devoted to this problem. The class of graphs for which the stabilization time is maximum contains a long chain of edges with an odd cycle at the end (opposite the source).

\section{Main results}

In the paper \cite{ChSh} it was proved that
$$
N(T) = N_1 T^{|E|-1} + o(T^{|E|-1}), \text{ where }
N_1 = \frac{2^{2-|V|}}{(|E|-1)!} \frac{\sum_{e\in E}
t_e}{\prod_{e\in E} t_e}
$$
This implies that the graph $\Gamma$, for which $N_\Gamma(T)$ grows the slowest, must have the maximum number of vertices for a given number of edges.
Therefore, $\Gamma$ must be a tree. Accordingly, the comparison of
$N_\Gamma(T)$ for different rooted trees $\Gamma$ should take place at the level of the second term of the $N(T)$ asymptotics.

\par

Let us note that the answer to a related question about the class of graphs of maximum growth can be obtained only from the first term of the asymptotics. The maximum growth is obtained by the complete graph $K_{|E|}$ on $|E|$ edges.

\par
In the paper \cite{Chernyshev2017} it was proved that for a rooted metric tree $\Gamma$
$$
N(T) = N_1 T^{|E|-1} + N_2 T^{|E|-2} + o(T^{|E|-2}),
$$
where
$$ N_2 =
\frac{1}{(|E|-2)! 2^{|E|-2}  \prod_{e\in E} t_e}
P_2(t_1,\ldots,t_{|E|})
$$
\begin{equation}
  2 P_2(t_1,\ldots,t_{|E|}) = -\sum_{e,f\in E} t_e t_f -
  \sum_{e\in E} t_e^2 - \sum_{e\in H} t_e^2+ 2 \sum_{e\in H, f\in E}
  t_e t_f + \sum_{e\in E} \sum_{f\in up(e)} t_e t_f,
  \label{T2}
\end{equation}
where $H$ is the set of \textit{hanging} edges, that is, such that the edge has an end of degree 1 that is not a root; $up(e)$ is a simple chain leading from the end of the edge $e$ to the root of the tree (the edge itself is included in this chain). Note that the first two terms in the formula for $P_2$ do not change when the edges are permuted, while the third and fourth terms change only when the set of hanging edges changes.

\begin{figure}[!h]
\begin{center}
\includegraphics[width=4cm]{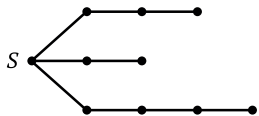}
\label{pic0}
\caption{Class of graphs, among which there is a graph of minimal growth $N(T)$}
\end{center}
\end{figure}

\begin{theorem}
  \label{th}
   Let us consider incommensurable edge lengths $t_1,\ldots,t_{|E|}$.
  If $\Gamma$ is a graph for which the function $N_\Gamma (T)$ grows minimally, then:

1) $\Gamma$ is the union of chains outgoing from the root (that is, a tree where all vertices except the root have degree 1 or 2);

2) each of these chains has at least 2 edges (see fig. \ref{pic2n}).
\end{theorem}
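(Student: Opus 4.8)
The plan is to minimize the second-order coefficient $N_2$ over all rooted trees with a fixed number of edges and fixed edge lengths. Since the first-order term $N_1$ is identical for all trees (it depends only on $|V|$, $|E|$, and the multiset of lengths, all fixed once we restrict to trees), minimality of growth is equivalent to minimizing $N_2$, and since the prefactor $\frac{1}{(|E|-2)!\,2^{|E|-2}\prod t_e}$ is a fixed positive constant, it suffices to minimize the polynomial $P_2$ given in \eqref{T2}. I would rewrite $2P_2$ by isolating the only two structure-dependent pieces: the term $-\sum_{e\in H} t_e^2 + 2\sum_{e\in H,\,f\in E} t_e t_f$, which depends on the choice of hanging edges $H$, and the term $\sum_{e\in E}\sum_{f\in up(e)} t_e t_f$, which depends on the full tree shape through the chains $up(e)$. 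The first two sums $-\sum_{e,f} t_e t_f - \sum_e t_e^2$ are tree-independent and can be discarded from the optimization.

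\medskip

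For part (1), I would show that any branching at a non-root internal vertex can be pushed toward the root (or equivalently, that splitting one chain into two is never advantageous) in a way that strictly decreases $P_2$. Concretely, I expect to set up an \emph{exchange argument}: take a putative minimizer $\Gamma$ and suppose some internal vertex $v\neq\text{root}$ has degree $\geq 3$. I would then compare $\Gamma$ with a modified tree $\Gamma'$ obtained by a local surgery that reduces the degree of $v$ (for instance, reattaching one of the subtrees hanging off $v$ to lie along an existing chain, so that the branching point migrates to the root). The key is to track exactly how the two structure-dependent terms change: reattaching a subtree alters which edges are hanging and lengthens the chains $up(e)$ for the edges in the moved subtree. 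The claim to verify is that the net change $\Delta(2P_2)$ has a definite sign for every choice of lengths, using only the incommensurability (general position) of the $t_e$ to rule out ties. Establishing that $\sum_{e}\sum_{f\in up(e)} t_e t_f$ is minimized precisely when the tree is a union of root-emanating chains — because in any such chain every edge's $up$-set is an initial segment, maximizing the number of edges whose chains are short — is the crux of the monotonicity.

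\medskip

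For part (2), I would argue that among unions of chains from the root, a chain consisting of a single edge is never optimal. Here the relevant contribution is the interplay between the hanging-edge term and the $up$-term: a one-edge chain makes that single edge simultaneously hanging and adjacent to the root, so it contributes $-t_e^2 + 2t_e\sum_{f} t_f$ (from the $H$-terms) plus only $t_e^2$ (from the $up$-term, since its chain is just itself). I would compare this against redistributing so that no chain is a lone edge — e.g., merging a singleton chain onto another chain or lengthening it — and show the resulting change in $P_2$ is again strictly signed. I expect this to reduce to a short inequality among sums of the $t_e$.

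\medskip

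The main obstacle, I anticipate, is \textbf{bookkeeping the two coupled structure-dependent terms simultaneously under surgery}. Changing the tree shape affects both the hanging set $H$ and all the chains $up(e)$ at once, and these may push $P_2$ in opposite directions for a single move; the delicate part is choosing the right elementary surgery and proving the combined effect has a fixed sign \emph{uniformly} in the lengths. I would handle this by designing each move to be as local as possible (moving a single pendant subtree) so that the change in $\sum_{e\in E}\sum_{f\in up(e)} t_e t_f$ decomposes into a sum of positive cross-terms $t_e t_f$ that I can pair against the $H$-term changes, and then invoke $\mathbb{Q}$-linear independence of the lengths to guarantee strict inequality and hence uniqueness of the optimal structure.
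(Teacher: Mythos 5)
Your reduction of the problem to minimizing the polynomial $P_2$ over rooted trees via local surgeries is indeed the paper's strategy in outline, but the step you yourself identify as the crux of part (1) is wrong as stated, and the coupling problem you flag at the end is left unresolved. The term $\sum_{e\in E}\sum_{f\in up(e)} t_e t_f$ equals $\sum_e t_e^2$ plus $\sum t_e t_f$ over ancestor--descendant pairs, so by itself it is minimized not by a union of root-emanating chains but by the \emph{star} (every edge incident to the root), which has no such pairs; the chains-from-the-root shape only emerges from the competition between this term and the hanging-edge terms $-\sum_{e\in H}t_e^2 + 2\sum_{e\in H,f\in E}t_et_f$, which the star maximizes. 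Your proposed elementary move --- reattaching a whole pendant subtree so that a branching vertex migrates toward the root --- changes both structure-dependent terms in opposite directions (it removes one edge from $H$, decreasing $2P_2$, while lengthening $up(f)$ for every edge $f$ of the moved subtree, increasing it), and you give no argument that the net change has a definite sign; pairing "positive cross-terms against the $H$-term changes" is exactly the cancellation you would have to control, and for general lengths it does not go one way.

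The paper's proof avoids this entirely by choosing a surgery that \emph{decouples} the two terms. Transformation T1 takes a single non-hanging edge $e$, picks a hanging edge $b$ in the branch of $e$, and reinserts $e$ immediately above $b$ (not below the leaf), so that $b$ stays hanging and the set $H$ is literally unchanged; only the $up$-term moves, and one computes $2P_{2,\Gamma}-2P_{2,\Gamma'} = t_e\sum_{f\in D\setminus l}t_f \ge 0$, a sum of positive terms with no cancellation, strict whenever the branch is not a simple chain. Iterating T1 gives part (1). Part (2) is then a separate move T2 absorbing a pendant single edge $b$ into a neighbouring edge, with explicit difference $-\sum_{e\in D}t_et_b + t_b^2 + 2t_b\sum_{e\ne b}t_e > 0$. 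Note also that strictness comes from these explicit positive expressions, not from $\mathbb{Q}$-linear independence of the lengths (which is needed only for the asymptotic formula for $N(T)$ to hold), and that you should justify the restriction to trees at the outset: it follows from the leading coefficient $N_1 \propto 2^{2-|V|}$, which forces a minimizer to maximize $|V|$.
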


\begin{proof}

The main method that we use to find the class of graphs with minimal growth $N(T)$ is the transformation (surgery) of the graph $\Gamma$, which preserves the connectivity, but the asymptotic growth of $N(T)$ becomes smaller.

First transformation (T1): permutation of the edge down the so-called
\textit{branch}.

\begin{defi}
 Let us consider a non-hanging edge $e$ of the rooted tree $\Gamma$. A branch
 $br(e)$ given by an edge $e$ is a subtree consisting of edges $up(e)$ and all edges $f$ such that $e\in up(f)$ (this is the set of all descendants and ancestors of edge $e$)
 (see fig. \ref{pic1n}).
\end{defi}

\begin{figure}[!h]
\begin{center}
\includegraphics[width=4cm]{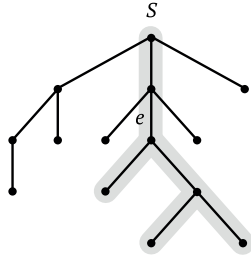}
\label{pic1n}
\caption{Example. Non-hanging edge $e$ and its branch $br(e)$.}
\end{center}
\end{figure}

Let us define the first transformation (T1) of the rooted tree. Let us take a non-hanging edge $e$ of the tree and consider its branch $br(e)$.
Consider any hanging edge $b\in br(e)$. Let us denote $l =
up(b)\setminus (\{b \}\cup up(e))$, $D = br(e) \setminus (up(e) \cup
\{b \})$.
Let us move the edge $e$ to the end of the edge
$b$ as shown in Fig. \ref{pic2n}.

\begin{figure}[!h]
\begin{center}
\includegraphics[width=16cm]{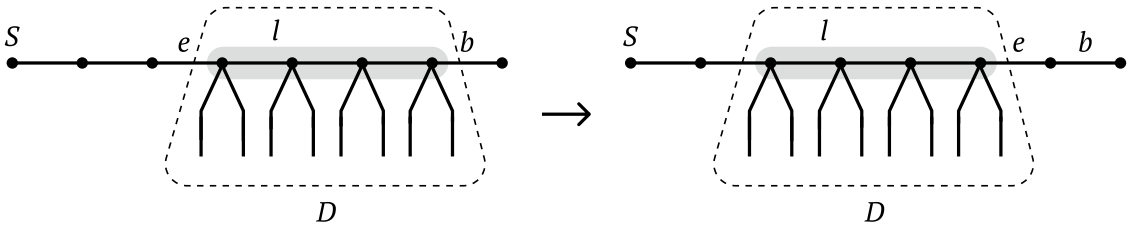}
\label{pic2n}
\caption{Transformation T1 that permutes the edge $e$ inside
$br(e)$}
\end{center}
\end{figure}

\begin{lemma}
  If the graph $\Gamma'$ is obtained from $\Gamma$ by the transformation T1, then
  $P_{2,\Gamma} \ge P_{2,\Gamma'}$. Moreover, if $D\setminus l \neq
  \emptyset$, then $P_{2,\Gamma} > P_{2,\Gamma'}$.
\end{lemma}

\begin{proof}

  During the transformation of T1, the set of hanging edges has not changed, so the first four terms in the formula \eqref{T2} have not changed. It remains to follow how the fifth term has changed.
  For edges $f$ located in $up(e)$, the set $up(f)$ has not changed.
  For the edges $f\in D$ (see Fig. \ref{pic2n}) the set $up(f)$ is decreased by the edge $e$. The set $up(b)$ has not changed. The set
 $up(e)$ has increased by the edges of a simple chain $l$ (see Fig.
 \ref{pic2n}). Therefore
 $$
 2P_{2,\Gamma} - 2 P_{2,\Gamma'} =  t_e  \sum_{f\in D\setminus l} t_f
 \ge 0
 $$
\end{proof}

Let us define a transformation (T2) that transfers a hanging edge to a neighboring edge.
Let the graph have a hanging edge $b$, one of the ends of which has a degree greater than 2. The transformation T2 joins this edge to the neighboring edge $d$
(see Fig. \ref{pic4n}).

\begin{figure}[!th]
\begin{center}
\includegraphics[width=10cm]{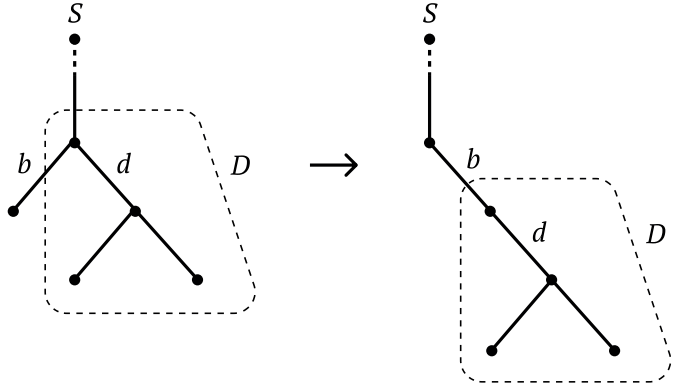}
\label{pic4n}
\caption{Transformation T2}
\end{center}
\end{figure}

\begin{lemma}
  If the graph $\Gamma'$ is obtained from $\Gamma$ by the T2 transformation, then
  $P_{2,\Gamma} > P_{2,\Gamma'}$.
\end{lemma}

\begin{proof}
$$
2 P_{2,\Gamma} - 2 P_{2,\Gamma'} = - \sum_{e\in D} t_e t_b + t_b^2 + 2
t_b \sum_{e\in E, e\neq b} t_e > 0
$$
\end{proof}

Let us proceed to the proof of the theorem. Let us prove part 1. We take a branch $br(e)$ of the non-hanging edge $e$. If the branch $br(e)$ is not a simple chain, then by transformation T1 we can reduce the asymptotics of $N(T)$. Thus, in the graph $\Gamma$, for which the growth of $N(T)$ is minimal, all branches must be simple chains. This means that the graph $\Gamma$ is the union of simple chains coming out of the root.

Let us prove part 2. If one of the simple chains has length 1, then it is a hanging edge, and by the transformation T2 this hanging edge can be moved into the neighboring chain, thus reducing the asymptotics of $N(T)$.

\end{proof}

\section{Examples}

A simple enumeration shows that for $|E|\le 5$ and any $t_1,\ldots,t_5$ the minimum growth graph
$N(T)$
is a simple chain (the root of which may be inside the chain).

Let us show that, by increasing $|E|$, the degree of the root of a graph of minimal growth can be made arbitrarily large by choosing close edge lengths.

Let us choose $d\in \mathbb N$. For $1\le k \le d$ consider the graphs
 $\Gamma_{l_1,\ldots,l_k}$ (from the theorem \ref{th}), which are the union of simple chains consisting of $k$ chains of lengths
$l_1,\ldots,l_k$ ($l_1+\ldots+l_k = |E|$).

In the polynomial $P_{2,\Gamma}(t_1,\ldots,t_{|E|})$
we select the symmetric part $S$ independent of $\Gamma$ and the asymmetric part
$T_\Gamma$ (it is enough to follow only the asymmetric part):
$$
2 P_{2,\Gamma}(t_1,\ldots,t_{|E|}) =  S(t_1,\ldots,t_{|E|}) +
T_\Gamma(t_1,\ldots,t_{|E|}),
$$
$$
S(t_1,\ldots,t_{|E|}) =
-\sum_{e,f\in E} t_e t_f -
\sum_{e\in E} t_e^2
$$
$$
T_\Gamma(t_1,\ldots,t_{|E|})= - \sum_{e\in H} t_e^2+ 2 \sum_{e\in H, f\in E}
t_e t_f + \sum_{e\in E} \sum_{f\in up(e)} t_e t_f, \
$$
Let us put equal arguments $x=t_1=\ldots=t_{|E|}$ into the polynomial $T_{\Gamma_{l_1,\ldots,l_k}}$
$$
 T_{\Gamma_{l_1,\ldots,l_k}}(x,\ldots,x) = x^2 \left[ \frac12
 (l_1^2+\ldots + l_k^2) + k (2|E|-1)+ \frac{|E|}{2} \right] \ge
 \frac{x^2 |E|^2}{2k} + O(|E|),
$$
where the inequality in the asymptotic estimate becomes an equality for the graph with $l_1 = \ldots =l_{k-1}= [\frac{|E|}{d}], l_k =|E|-l_1-\ldots
-l_{k-1}$.

In a small neighborhood of $x$ it is possible to take incommensurable $t_1, \ldots,
t_{|E|}$ and for this set the asymptotic inequalities will still hold.

Hence, with such a choice of $t_1,\ldots,t_{|E|}$, the graphs
$\Gamma_{l_1},\ldots,\Gamma_{l_1,\ldots,l_{d-1}}$ are not graphs of minimal growth $N(T)$. Therefore, a graph of minimal growth
$N(T)$
has a root degree not less than $d$.

Using computer enumeration over all partitions of the number
$|E|=24$, it can be shown that the graph of minimal growth
$N(T)$ (in the case of almost identical
$t_1$,\ldots,$t_{|E|}$) consists of exactly three chains of length 8.

\subsection{Acknowledgment}
%This work was supported by the Russian Science Foundation grant 22-11-00272.
The authors thank A.\,I. Shafarevich and L.\,V. Dworzanski for helpful discussions.

\bibliographystyle{plain}

\begin{thebibliography}{10}\label{bibliography}


\bibitem{KB}
G.~Berkolaiko, P.~Kuchment, Introduction to Quantum Graphs, Mathematical Surveys and Monographs, 2014, vol. 186 AMS.

\bibitem{RW}
V.\,L.~Chernyshev, A.\,A.~Tolchennikov, Polynomial approximation for the number of all possible endpoints of a random walk on a metric graph, Electronic Notes in Discrete Mathematics, Elsevier, 2018, vol. 70, pp. 31-35.

\bibitem{RCD16}
Chernyshev V. L.,  Tolchennikov A. A.,  Shafarevich A. I., Behavior of Quasi-particles on Hybrid Spaces. Relations to the Geometry of Geodesics and to the Problems of Analytic Number Theory, Regular and Chaotic Dynamics, 2016, vol. 21, no. 5, pp. 531--537.

\bibitem{ChSh}
V.\,L.~Chernyshev, A.\,I.~Shafarevich, Statistics of Gaussian packets on metric and decorated graphs, Philosophical transactions of the Royal Society A, 2014, vol. 372, issue 2007.
%Article number: 20130145,
%DOI: 10.1098/rsta.2013.0145.

\bibitem{Chernyshev2017}
V.\,L.~Chernyshev, A.\,A.~Tolchennikov, Correction to the leading term of asymptotics in the problem of counting the number of points moving on a metric tree, Russian Journal of Mathematical Physics, 2017, vol. 24, issue 3, pp. 290-298.

\bibitem{RCD}
V.\,L.~Chernyshev, A.\,A.~Tolchennikov, The Second Term in the Asymptotics for the Number of Points Moving Along a Metric Graph, Regular and Chaotic Dynamics, 2017, vol. 22, issue 8, pp. 937-948.

\bibitem{ETDS}
Chernyshev V.\,L, Tolchennikov A.\,A. Asymptotic estimate for the counting problems corresponding to the dynamical system on some decorated graphs. Ergodic Theory and Dynamical Systems. Cambridge University Press. 2017. pp 1-12. doi:10.1017/etds.2016.102.

\bibitem{DW}
Dworzanski, L.W. Towards Dynamic-Point Systems on Metric Graphs with Longest Stabilization Time. ArXiv, abs/2010.12528. 2020.

%\bibitem{2021}
%V.\,L.~Chernyshev, A.\,A.~Tolchennikov, Asymptotics of the Number of Endpoints of a Random Walk on a Certain Class of Directed Metric Graphs, Russian Journal of Mathematical Physics, 2021, vol. 28, issue 4, pp. 434-438.
%%Preprint arXiv:1410.5093
%% https://doi.org/10.1134/S1061920821040038

\end{thebibliography}

\end{document}